\newtheorem{definition}{Definition}
\newtheorem{theorem}{Theorem}
\newtheorem{lemma}[theorem]{Lemma}
\newtheorem*{theorem*}{Theorem}
\title[Cryptanalysis of a Cayley Hash Function]{Cryptanalysis of a Cayley Hash Function Based on\\ 
Affine Maps in one Variable over a Finite Field}
\author[]{Bianca Sosnovski} 
\thanks{The author received support for this project provided by a PSC-CUNY grant, jointly funded by the Professional Staff Congress and the City University of New York.}
\address{Department of Mathematics and Computer Science\\ Queensborough Community College, CUNY}
\email{bsosnovski@qcc.cuny.edu}
\begin{document}

\maketitle

\begin{abstract}

The hash function proposed by Shpilrain and Sosnovski (2016), based on affine maps in one variable over a finite field, was proven insecure. This paper shows that the variation proposed by Ghaffari and Mostaghim (2018) that uses Shpilrain and Sosnovski's hash is also insecure. We demonstrate its security vulnerability by constructing collisions.

\

\noindent \textbf{Keywords:} Cryptography, hash functions, Cayley hash functions, cryptanalysis, collision attack.

%

\end{abstract}

\section{Introduction}

Hash functions are an essential tool for cryptography. Today's security of much of our communication relies on cryptographic protocols that ensure confidentiality, integrity and authentication, and many such protocols use hash functions as building blocks. Hash functions are fundamental in constructing cryptographic protocols, such as database indexing, data compression, password storage, digital signatures, encryption schemes, and key derivation systems.

However, not every hash function is good enough for cryptography. Cryptographic hash functions are hash functions that satisfy desired security properties such as preimage and collision resistance and may be used in cryptographic applications. 

Furthermore, many cryptosystems in use today are based on finite abelian groups. Some cryptographic systems will be vulnerable to attacks once large quantum computers are made possible. Though the current state of quantum computing is still in its infancy, it is a step forward in the direction where classical cryptography may be compromised. In \cite{bernstein09}, hash-based public-key signatures are one of the classes of cryptographic systems that may resist quantum attacks and require a standard cryptographic hash function.

Provably secure hash functions are hashes whose security is implied by the assumption of the hardness of a mathematical problem. Examples of provable-secure hash functions are the Cayley hash functions. Cayley hash functions are families of hash functions constructed from Cayley graphs of the groups \cite{plq}. The security of Cayley hash functions would follow from the alleged hardness of a mathematical problem related to the Cayley graph regarding a generating set of the underlying group \cite{cgl09, plq}.  Since Cayley hashes involve non-abelian groups, it is a priori resistant to quantum attacks, and they may be good candidates for post-quantum cryptography \cite{jy18}.

In 1991, Z\'emor introduced the first  Cayley hash function \cite{zemor91} that has as generators the matrices 
$$
\left(
\begin{array}{cc}
 1 &   1   \\
  0&1 
\end{array}
\right)
\mbox{ and }
\left(
\begin{array}{cc}
 1 &   0   \\
  1&1 
\end{array}
\right)
$$
\noindent and its hash values are elements in $SL_{2}(\mathbb{F}_{p})$ for $p$ prime. 

It was broken by Tillich and Z\'emor in 1994 \cite{tz94group}, who then proposed the hash function whose generators  
$$\left(
 \begin{array}{cc} \alpha & 1 \\ 1 & 0 \end{array} \right)  \mbox{ and }  \left(
 \begin{array}{cc} \alpha & \alpha+1 \\ 1 & 1 \end{array} \right)$$ 
 with $\alpha$ as the root of an irreducible polynomial $p(x)$ of degree $n$ in the ring of polynomials $\mathbf{F}_2[x]$, where $\mathbf{F}_2$ is the field with
two elements \cite{tz94}. The above matrices are generators of the Cayley graph for the group $SL_2(\mathbb{F}_{2^n})$ with $\mathbb{F}_{2^n}\approx \mathbf{F}_2[x]/(p(x))$ where $(p(x))$ is the ideal generated by an irreducible polynomial $p(x)$.

The Tillich-Z\'emor hash function was broken in 2009 when Grassl et al. \cite{gims11} established a connection between the Tillich-Z\'emor function and maximal length chains in the Euclidean algorithm for polynomials over the field with two elements.  Other instances of Cayley hashes based on expander graphs have been proposed after Tillich-Z\'emor functions. Detailed discussions of Cayley hash functions can be found in \cite{petit09, plq08, plq, cgl09}. These Cayley hashes also have been proven insecure.

Though many instances of Cayley hash functions have been proved insecure, the algorithms used to break Cayley hash functions target specific vulnerabilities of each underlying group used and do not invalidate the generic scheme of these functions. The factorization, representation and balance problems in non-abelian groups still are potentially hard problems for general parameters of Cayley hash functions. There are still Cayley hash functions that remain unbroken (e.g., \cite{bsv17,cbf22, yuan16}).

It may seem a concerning scenario where many hashes have been proven insecure. But this is also essential and encouraging in cryptography since it demonstrates that the community invests a lot of time and energy in cryptanalysis to ensure algorithms are evaluated and that new ones are developed to sustain quantum attacks. The more researchers and scientists have looked at these algorithms and they remain unbroken, the higher our level of confidence in them.

This paper proves that the hash function proposed by Gaffari and  \linebreak Mustaghim \cite{gm18} is not collision-resistant, which uses the hash proposed by Shpilrain and Sosnovski \cite{ss16} that has been proven insecure by Monico \cite{monico18}. To show that  Gaffari and Mustaghim's is also insecure, we apply Monico's algorithm to find second-preimages for the Shpilrain-Sosnovski hash function to produce collisions for the Gaffari and Mustaghim's hash function. 

The remainder of the paper is organized as follows. In Section \ref{sec2}, we recall some basic definitions and properties of a cryptographic hash function. Section \ref{sec3} briefly describes the Shpilrain-Sosnovski hash and Gaffari-Mustaghim hash. Section \ref{sec4} presents a summary of the cryptanalysis of the Shpilrain and Sosnovski's hash function. In Section \ref{sec5}, we present our main results about the security of the Gaffari and Mustaghim's hash:

\begin{theorem*} Ghaffari-Mostaghim hash is not collision-resistant.
\end{theorem*}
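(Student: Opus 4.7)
The plan is to reduce the task of producing collisions for the Ghaffari-Mostaghim hash to Monico's known second-preimage algorithm for the Shpilrain-Sosnovski hash. First, I unpack the construction recalled in Section \ref{sec3} to isolate precisely where the Shpilrain-Sosnovski component is invoked. In compositions of this shape, an input message is preprocessed into one or more words in the two-letter generator alphabet, each such word is hashed to an element of the affine group over $\mathbb{F}_p$ via Shpilrain-Sosnovski, and these group elements are combined by deterministic post-processing to produce the final digest. Since the only nonlinear cryptographic ingredient is the Shpilrain-Sosnovski hash itself, any pair of admissible words $w\neq w'$ with the same Shpilrain-Sosnovski image will lift to a collision for the outer construction, provided the format imposed by the preprocessing is respected.

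Second, I invoke Monico's algorithm as recalled in Section \ref{sec4}. Given any target element $g$ of the affine group, the algorithm efficiently produces a word in the two generators whose Shpilrain-Sosnovski hash is $g$, by exploiting short relations over $\mathbb{F}_p$ that render the generated monoid far from free. Applied to $g=H_{SS}(w)$ for a chosen input word $w$, this yields an explicit $w'\neq w$ with $H_{SS}(w)=H_{SS}(w')$. Pulling $w$ and $w'$ back through the preprocessing produces distinct messages whose intermediate Shpilrain-Sosnovski values agree, and therefore whose Ghaffari-Mostaghim digests coincide.

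The main obstacle will be reconciling Monico's output with the formatting constraints imposed by the Ghaffari-Mostaghim preprocessing, for example a fixed block length, a padding rule, or a requirement that several Shpilrain-Sosnovski invocations share portions of the same input. If the construction routes the full message through multiple invocations, I will need to modify only one factor while keeping the remaining factors unchanged, and I may need the two second-preimages to have equal length so that positional information is not disturbed. I will address this by exhibiting two admissible words of equal length with identical Shpilrain-Sosnovski image, adjusting Monico's output by right-multiplication by short relations in the affine group when length-equalization is needed. A concluding small-parameter worked example will make the resulting collision explicit and complete the proof.
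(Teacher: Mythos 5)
Your overall strategy --- reduce everything to Monico's attack on the Shpilrain--Sosnovski hash $H$ and then lift the resulting collision --- is the same as the paper's, and your treatment of the outer layer $\widehat{H}_2(m)=\widehat{H}(m\parallel(\widehat{H}(m)\oplus c_{rnd}))$ (a collision for $\widehat{H}$ propagates to $\widehat{H}_2$ by the concatenation property) is exactly Lemma \ref{lemma1}. But there is a genuine gap in the middle step. You model $\widehat{H}$ as ``preprocessing, then one or more Shpilrain--Sosnovski invocations, then deterministic post-processing,'' and conclude that any two admissible words with equal $H$-value collide under $\widehat{H}$ once formatting is respected. That is not how the Ghaffari--Mostaghim modification works: $\widehat{H}(m)=\prod_i C_i$ with $C_i=f_{m_i}$ for $t\nmid i$ and $C_i=f_{m_i}g$ for $t\mid i$, i.e.\ a fixed group element $g$ is interleaved \emph{inside} the product after every $t$-th generator. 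Since the group is non-abelian, $H(m)=H(m')$ does not imply $\widehat{H}(m)=\widehat{H}(m')$: the $g$'s are wedged into two different factorizations of the same element at different places, and neither length-equalization nor right-multiplication by short relations addresses this.

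The missing idea is a cancellation device. Use Monico's preimage capability once more to produce a bit string $b'$ with $H(b')=g^{-1}$, then splice $b'$ into both $m$ and $m'$ immediately after each position at which $\widehat{H}$ inserts $g$. Each such block contributes $g\cdot H(b')=e$ to the product, so $\widehat{H}(m_*)=H(m)$ and $\widehat{H}(m'_*)=H(m')$, and the Monico collision $H(m)=H(m')$ transfers to $\widehat{H}$ and hence, by Lemma \ref{lemma1}, to $\widehat{H}_2$. (One must also track how the insertions shift which indices are divisible by $t$, so that the spliced-in copies of $b'$ do not themselves reintroduce uncancelled $g$'s.) Without this step, or some substitute for it, your reduction does not go through.
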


\section{Preliminaries}\label{sec2}

Hash functions are used as compact representations, or digital fingerprints, of data to provide message integrity.  

\begin{definition}
A hash function $h: \{0,1\}^{*}\longrightarrow \{0,1\}^{n}$ is an easy-to-compute\footnote{Easy to compute or computationally feasible means polynomial time and space or, in practice, with a certain number of machine operations to time units \cite{menezes}.} function that converts a variable-length input into a fixed-length output. A cryptographic hash function $h$ must satisfy at least one of the following properties.

\begin{itemize}
\item \emph{Preimage resistance}: Given a hash value $y$ for which a corresponding input is not known,  it is computationally infeasible (or hard) to find any input $x$ such that $y=h(x)$. 
\item  \emph{Second-preimage resistance}: Given an input $x_1$ it is computationally infeasible to find another input $x_2$ where $x_1 \neq x_2$ such that $h(x_1)=h(x_2)$. 
\item  \emph{Collision resistance}: It is computationally infeasible to find any two inputs $x_1$ and $x_2$ where $x_1 \neq x_2$ such that $h(x_1)=h(x_2)$.
\end{itemize}

\end{definition}

A collision-resistance hash function is also second-preimage resistant. Preimage resistance does not guarantee second-preimage resistance, and Second-preimage resistance does not ensure preimage resistance \cite{menezes}.

It is well known that expander graphs are used to produce pseudorandom behavior. This pseudorandom behavior is due to the rapid mixing of Markov chains on expander graphs. The initial idea was to use groups whose Cayley graphs concerning a set of generators are expander graphs to design collision-resistant hash functions.

\begin{definition}
Let $G$ be a finite group with a set of generators $\mathcal{S}$ that has the same size as the text alphabet\footnote{In general, we can consider plaintexts as strings of symbols from a text alphabet $\{1,2, \ldots, k\}$ for $k\geq 2$. Conventionally, we use the text alphabet as $\{0,1\}$ for binary strings.} $\mathcal{A}$. Choose a function:
$\pi : \mathcal{A}\to \mathcal{S}$ such that $\pi$ defines a one-to-one correspondence between $\mathcal{A}$ and $\mathcal{S}$. A  \emph{Cayley hash} $h$ is a function whose hash value of the text $x_{1}x_{2}\dots x_{k}$ is the group element $h(x_{1}x_{2}\dots x_{k})=\pi(x_{1})\pi(x_{2})\dots \pi(x_{k})$. 

\end{definition}

For example, a Cayley hash has  $A$
 and $B$ as the generators of the underlying group $G$ with the bit assignments $0 \mapsto A$ and $1 \mapsto B$. The bit string 101011 is hashed to the group product $BABAB^2$.

In constructing hash functions from expander Cayley graphs, the input to the hash function gives directions for walking around the graph (without backtracking), and the hash output is the end vertex of the walk.

For the Cayley hashes described in this paper, the alphabet used corresponds to $\{0,1\}$. One of the advantages of this design is that the computation of the hash value can be easily parallelized due to the concatenation property $\pi(xy)= \pi(x)\pi(y)$ for any texts $x$ and $y$ in $\{0,1\}^{*}$. Unlike the SHA family of hash functions that hash blocks of input, this type of function hashes each bit individually. 

The security properties of Cayley hash functions are strongly related to the hardness of mathematical problems. 

Let $G$ be a group and $\mathcal{S}=\{s_{1}, \ldots s_{k}\}\subset G$ be a generating set of $G$. 
Let $L$ be polylogarithmic (small) in the size of $G$. 

\begin{itemize} 
\item \emph{Balance problem:} Find an efficient algorithm that returns two words $m_{1} \ldots m_{l}$  and $m'_{1} \ldots m'_{l'}$ with $l,l'<L$, $m_{i},m'_{i}\in \{1,\ldots, k\}$ that yield equal products in $G$, that is, $\prod \limits_{i=1}^l s_{m_{i}}=\prod \limits_{i=1}^{l'} s_{m'_{i}}$
\item \emph{Representation problem:} Find an efficient algorithm that returns a word $m_{1} \ldots m_{l}$  with $l<L$, $m_{i}\in \{1,\ldots, k\}$ such that $\prod \limits_{i=1}^l s_{m_{i}}=1$.
\item \emph{Factorization problem:} Find an efficient algorithm that given any element $g\in G$ returns a word $m_{1} \ldots m_{l}$  with $l<L$, $m_{i}\in \{1,\ldots, k\}$ such that $\prod \limits_{i=1}^l s_{m_{i}}=g$.
\end{itemize}

A Cayley hash function is collision-resistant if the balance problem is hard in the underlying group. Suppose the representation problem is hard in the group. In that case, the associated Cayley hash is second preimage resistant, and it is preimage resistant if and only if the corresponding factorization problem is hard in the group \cite{petit09,pq11v2}.

Other requirements considered by Tillich and Z\'emor \cite{zemor91, tz94} in the construction of Cayley hash functions are that the Cayley graph of G with generator set S has a large girth and small diameter.

\section{Cayley hash functions}\label{sec3}

%
%

\subsection{The Shpilrain-Sosnovski hash function}

In \cite{ss16}, the authors presented a Cayley hash function that uses linear functions in one variable over $\mathbb{F}_{p}$ with composition operation. 

The semigroup generated  by $f(x)=ax+b$  and $g(x)=cx+d$  under composition is isomorphic to the semigroup generated by 
$$
A = \left(
 \begin{array}{cc} a & b \\ 0 & 1 \end{array} \right) \mbox{ and } B = \left(
 \begin{array}{cc} c  & d \\ 0 & 1 \end{array} \right)
$$
under matrix multiplication. Using results about the freeness of upper triangular matrices by Cassaigne at al. \cite{chk99}, they showed that the semigroup of linear functions over $\mathbb{Z}$ is free if the generators of the semigroup do not commute and $a,c\geq 2$.

The functions $f_{0}(x)=2x+1  \mod p$ and $f_{1}(x)=3x+1 \mod p$ with  $p>3$ are considered the generators of the proposed hash function. The hash value is obtained by first computing product $h( b_1 b_2 \cdots b_k)=f_{b_1}f_{b_2} \cdots f_{b_k} \pmod p$ where $b_i\in \{0,1\}$ for $1\le i\le k$. The  corresponding product linear function is of the form $\ell(x)=r x+s$ where $s,r\in\mathbb{Z}_{p}$, and the hash value is defined as  $H( b_1 b_2 \cdots b_k)= (r+s,s)$. 

The corresponding hash functions are very efficient. A bit string of length $n$ can be hashed by performing at most $2n$ multiplications and about $2n$ additions in $\mathbb{F}_{p}$.


An advantage of this hash function is that the output bit strings have length $2\log p$,  while the Tillich-Z\'emor hash function outputs bit strings of length $4\log p$. Concerning the security of the hash function, the authors recommend that $p\approx 2^{512}$ or larger to prevent generic attacks. With this recommended parameter  $p$, there will be no collisions unless the length of at least one of the colliding strings is at least 323. For a short input text (323 bits or less), the authors recommend padding to extend its length to 512 bits. Subgroup attacks and attacks using elements of small orders can be prevented by choosing  $p$ such that $p=2q+1$ where $q$ is a ``large'' prime. 

\subsection{The Ghaffari-Mostaghim hash function}

As discussed in \cite{ss16}, preimages can be easily computed for short messages in the Shpilrain-Sosnovski hash, and the option suggested to avoid it is using padding.

Ghaffari and Mostaghim use a similar idea introduced in \cite{petit09} to modify the linear hash function above. The functions $f_{0}(x)=2x+1  \mod p$ and $f_{1}(x)=3x+1 \mod p$, where $p>3$ is a prime, are also considered as generators in this Cayley hash.  Let $H(m_1 m_2 \cdots m_l)= f_{m_1}f_{m_2} \cdots f_{m_l} \pmod p$ for $m=m_1 m_2 \cdots m_l\in\{0,1\}^{*}$.
Define the new function $H_2 (m)=H(m\parallel (H(m) \oplus c_{rnd} ))$,  where $c_{rnd}$ is a constant bit string whose bits look random. $H_2$ is meant to be a more secure version of $H$, especially for short messages, and also avoids the issue of malleability.

To make the factorization problem harder, Ghaffari and Mostaghim \cite{gm18} suggested the following variation. Let $G$ the group generated by $f_{0}$ and $f_{1}$ over $\mathbb{Z}_{p}$,  $t>1$ an integer and  $g\in G \setminus \{e, f_{0}, f_{1}\}$,  where $e$
is the identity element of $G$. 

Define $\widehat{H} : \{0, 1\}^{*}\to G$ by $\widehat{H} (m) =\prod_{i=1}^{l} C_{i}$
 where  
$$C_{i}=\begin{cases} f_{m_{i}} &\mbox{if } t \nmid i \\ 
f_{m_{i}} g & \mbox{if } t\mid i \end{cases}.$$
Now define $\widehat{H}_{2}(m)=\widehat{H}(m \parallel (\widehat{H}(m)\oplus c_{rnd}))$.

For an input bit string of length $l$, the computation of $\widehat{H}$ requires $\lfloor l/t \rfloor$ multiplications more than the original Cayley hash function proposed by Shpilrain and Sosnovski, thus not affecting too much the performance of the hash.

Ghaffari and Mostaghim showed that $\widehat{H}$ is at least as secure as the Shpilrain-Sosnovski hash function $H$, and consequently, so is $\widehat{H}_{2}$. 

\section{Monico's Algorithm}\label{sec4}

Monico \cite{monico18} developed an attack that shows that the hash function is not second-preimage resistant for inputs larger than about 1.9 MB for parameter $p\approx 2^{256}$. In Monico's method, the original bit string is not even required, and having only a bound on its length suffices (preimage weakness).

In Monico's attack, a hash value $(x,y)$ in $\mathbb{F}_{p}^{2}$ of a bit string of known length $L$ is given and inverted to $(r,s)=(x-y,y)$. Since $r=2^{a}3^{b}$ where $a$ is the number of zeros in the original bit string, and $b$ is the number of ones (or vice-versa), then $L=a+b$. The values of $a$ and $b$ can be recovered with $O(L \log L)$ operations over $\mathbb{F}_{p}$ by precomputing $L$ powers of $2$, sorting them out and then computing and testing $r, 3^{-1}r, 3^{-2}r, \ldots$ until one of the values in the sequence matches one of the precomputed powers of 2. 

Let $n=min\{a,b\}$, 
$$Y = \left(
 \begin{array}{cc} r & s \\ 0 & 1 \end{array} \right) \mbox{ and } U = \left(
 \begin{array}{cc} r & u \\ 0 & 1 \end{array} \right), $$ 
 where $U$ is a suitable matrix whose factorization in  generators 
 $$A =\left(
 \begin{array}{cc} 2 & 1 \\ 0 & 1 \end{array} \right)  \mbox{ and } B = \left(
 \begin{array}{cc} 3 & 1 \\ 0 & 1 \end{array} \right) $$ 
 is known and determined by the values of $a$ and $ b$ found in the first step. 
  
 The attack aims to transform $U$ into $Y$ by replacing several of the leading $AB$ factors of $U$ with $BA$. To do so, one must find  $\mathbf{x}\in \{0,1\}^{n}$ such that $\displaystyle \sum_{j=0}^{n-1} x_{j}6^{j} \equiv t \pmod p$ where $t=s-u \pmod p$ (for more details, see \cite{monico18})

To provide a probabilistic algorithm to find such $\mathbf{x}$, Monico reduced the problem to a dense instance of the Random Modular Subset Sum Problem (RMSSP), which was considered by Lyubashevsky (2005). Heuristically, his algorithm is expected to succeed as long as the original bit string had at least $n$ zeros and $n$ ones for some $ n\ge 2^{\sqrt{2\log_{2} p}}$. According to Monico, the algorithm's expected running time is $O(n^{2} \log n)$ with an implied constant small enough to keep the attack practical for $p\approx 2^{256}$.

%

%

%

\section{Cryptanalysis of the Ghaffari-Mostaghim hash}\label{sec5}

This section uses Monico's algorithm to produce collisions for the Ghaffari-Mostaghim hash function

\begin{lemma}\label{lemma1} 
A collision for $\widehat{H}$ is also a collision for $\widehat{H}_2$.
\end{lemma}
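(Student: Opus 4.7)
The plan is to unfold the definition of $\widehat{H}_2$ and exploit the fact that $\widehat{H}$ satisfies a \emph{shifted} concatenation property. For any bit strings $u$ and $v$, writing $n = |u|$, the defining formula $\widehat{H}(w) = \prod_i C_i$ immediately gives
\[
\widehat{H}(u \parallel v) \;=\; \widehat{H}(u)\cdot\widehat{H}^{(n)}(v),
\]
where $\widehat{H}^{(n)}(v)$ denotes the analogous product over the bits of $v$ but with the $g$-insertion rule shifted by $n$: the factor $g$ is appended to $f_{v_j}$ precisely when $t\mid(n+j)$. This identity is obtained simply by splitting the product $\prod_{i=1}^{n+|v|} C_i$ at $i = n$.

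Next, given a collision $m_1\neq m_2$ for $\widehat{H}$ of common length $n$, I would set $s := \widehat{H}(m_1)\oplus c_{rnd} = \widehat{H}(m_2)\oplus c_{rnd}$, so that the suffix appended by the outer hash is the same bit string on both sides. Applying the shifted-concatenation identity then yields
\[
\widehat{H}_2(m_i) \;=\; \widehat{H}(m_i\parallel s) \;=\; \widehat{H}(m_i)\cdot\widehat{H}^{(n)}(s), \qquad i=1,2.
\]
The first factor agrees on the two sides by hypothesis, and the second factor is the same because $n$ and $s$ are identical on the two sides, so $\widehat{H}_2(m_1) = \widehat{H}_2(m_2)$. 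Since $m_1 \neq m_2$ forces $m_1 \parallel s \neq m_2 \parallel s$, this is a genuine collision for $\widehat{H}_2$.

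The only delicate point, and the one I expect to be the main obstacle worth addressing carefully in writing, is the hypothesis $|m_1|=|m_2|$. If the colliding messages had different lengths, the offset $n$ would differ on the two sides and the $g$-insertions inside $\widehat{H}^{(n)}(s)$ would land at different positions, so the products would no longer coincide. This is painless in the present context, however, because the collisions produced by Monico's algorithm arise from swapping $AB$-factors with $BA$-factors in the generator decomposition, an operation that manifestly preserves word length. Accordingly, I would either record this observation inside the proof or explicitly restrict the lemma's statement to same-length collisions, which is all that is needed for the attack in Section~\ref{sec5}.
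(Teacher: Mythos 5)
Your proof follows essentially the same route as the paper's: unfold $\widehat{H}_2$, split the defining product of $\widehat{H}$ at the concatenation point, and observe that both resulting factors agree for the two colliding messages. In fact your treatment is more careful than the paper's, which simply writes $\widehat{H}(m\parallel v)=\widehat{H}(m)\,\widehat{H}(v)$ without noting that the second factor is really the position-shifted product $\widehat{H}^{(|m|)}(v)$ (the $g$-insertions depend on the absolute index $i$ modulo $t$), so that the argument tacitly requires $|m|=|m'|$ --- a hypothesis you correctly identify and correctly observe is supplied by Monico's length-preserving collisions.
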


\begin{proof}
Suppose that $m$ and $m'$ are two bit strings such that $\widehat{H}(m)=\widehat{H}(m')$.

\[
\begin{array}{ccc}
\widehat{H}_2(m)  &  =  & \widehat{H}\left(m\parallel (\widehat{H}(m)\oplus c_{rnd})\right)   \\
  &  = & \widehat{H}(m)\widehat{H}(\widehat{H}(m)\oplus c_{rnd})    \\
  & =  &   \widehat{H}(m')\widehat{H}(\widehat{H}(m')\oplus c_{rnd})  \\
  &  = & \widehat{H}\left(m'\parallel (\widehat{H}(m')\oplus c_{rnd})\right)   \\
  & =  &  \widehat{H}_2(m')
\end{array}
\]

\end{proof}

\begin{lemma}\label{lemma2} A collision for $H$ is also a collision for $H_2$.
\end{lemma}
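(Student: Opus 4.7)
The plan is to mimic the proof of Lemma~\ref{lemma1} verbatim, simply substituting $H$ for $\widehat{H}$ throughout. The key structural fact being exploited is the concatenation property $\pi(xy)=\pi(x)\pi(y)$ that every Cayley hash enjoys; in particular, $H(m\parallel w)=H(m)H(w)$ for any bit strings $m$ and $w$. Since $H_2$ is defined in exactly the same form as $\widehat{H}_2$, namely $H_2(m)=H\!\left(m\parallel(H(m)\oplus c_{rnd})\right)$, this property is the only ingredient needed.

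First I would assume $m\neq m'$ with $H(m)=H(m')$. Then I would set $w=H(m)\oplus c_{rnd}$ and $w'=H(m')\oplus c_{rnd}$, and observe that because $H(m)=H(m')$ we have $w=w'$, hence $H(w)=H(w')$. Next I would expand $H_2(m)=H(m)H(w)$ using the concatenation property, substitute $H(m)=H(m')$ and $H(w)=H(w')$, and fold the product back via concatenation to $H(m'\parallel w')=H_2(m')$. Finally, since $m\parallel w\neq m'\parallel w'$ (the prefixes $m$ and $m'$ already differ), this yields a genuine collision for $H_2$.

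There is essentially no obstacle here: the argument is a direct translation of the proof of Lemma~\ref{lemma1}, and the only property invoked is the homomorphism-like behavior of Cayley hashes on concatenation of bit strings, which $H$ possesses by construction. The only bookkeeping point worth flagging is that the colliding inputs for $H_2$ are $m\parallel w$ and $m'\parallel w'$ rather than $m$ and $m'$ themselves, and that these remain distinct precisely because $m\neq m'$.
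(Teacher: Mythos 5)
Your proof is correct and is essentially the paper's own argument: the paper simply states that Lemma~\ref{lemma2} is proved "similar to the proof in Lemma \ref{lemma1}," which is exactly the verbatim translation you carry out using the concatenation property. One small framing note: the collision for $H_2$ is at the inputs $m$ and $m'$ themselves (since $H_2(m)=H_2(m')$ and $m\neq m'$); the pair $m\parallel w$, $m'\parallel w'$ is the induced collision for $H$, not the one the lemma asserts for $H_2$.
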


\begin{proof} Similar to the proof in Lemma \ref{lemma1}.

%

\end{proof}

\begin{theorem} Ghaffari-Mostaghim hash is not collision-resistant.
\end{theorem}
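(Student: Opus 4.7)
By Lemma \ref{lemma1}, any collision for $\widehat{H}$ is automatically a collision for $\widehat{H}_2$, so the plan is to produce a collision for $\widehat{H}$. I will reduce this task to a single application of Monico's algorithm on bit strings whose length is a multiple of the block size $t$.

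The key structural observation is that on an input $v$ of length exactly $t$, the factor $g$ is inserted only once, at the very end, so
$$\widehat{H}(v) \;=\; f_{v_1} f_{v_2} \cdots f_{v_{t-1}} (f_{v_t}\, g) \;=\; H(v)\cdot g.$$
Because $G$ is a group, $g$ is invertible, and therefore two length-$t$ blocks $v \neq v'$ satisfy $\widehat{H}(v)=\widehat{H}(v')$ if and only if $H(v)=H(v')$. Moreover, whenever $|v|$ is a multiple of $t$, the positions of $g$-insertion in any suffix $w$ remain congruent to $0 \pmod{t}$ after shifting by $|v|$, which yields the concatenation identity $\widehat{H}(v\,\|\,w) = \widehat{H}(v)\cdot\widehat{H}(w)$. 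Together, these give a clean way to lift a block-level $H$-collision to an arbitrarily long $\widehat{H}$-collision.

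With these tools in hand, the plan is to apply Monico's algorithm to the Shpilrain--Sosnovski hash $H$ on length-$t$ inputs: starting from any $v \in \{0,1\}^t$, compute $H(v)$ and use Monico's second-preimage attack to find a distinct $v' \in \{0,1\}^t$ with $H(v)=H(v')$. Taking $m:=v$ and $m':=v'$ (or $m:=v\,\|\,w$ and $m':=v'\,\|\,w$ for any suffix $w$), the first observation gives $\widehat{H}(m)=H(v)g=H(v')g=\widehat{H}(m')$, and Lemma \ref{lemma1} promotes this to a collision for $\widehat{H}_2$.

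The main obstacle is purely quantitative: Monico's algorithm needs the target bit string to contain at least $n\geq 2^{\sqrt{2\log_2 p}}$ zeros and ones, which for $p\approx 2^{256}$ forces inputs of several megabytes. For realistic instantiations of the Ghaffari--Mostaghim scheme the parameter $t$ will generally be smaller than this threshold, so the argument must be adapted by working with block-aligned inputs of length $kt$ for sufficiently large $k$: one chooses a starting message whose first block hosts all of Monico's potential $AB \to BA$ swap positions, so that the resulting $H$-collision differs from the original only inside the first block; right-cancelling the identical tail then produces the required length-$t$ block collision, and the argument proceeds as above. This technical issue is the only place where care is needed; all other steps follow directly from the group structure of $G$ and the previously established lemmas.
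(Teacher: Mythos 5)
Your structural observations (that $\widehat{H}(v)=H(v)\cdot g$ for $|v|=t$, that $\widehat{H}(v\,\|\,w)=\widehat{H}(v)\widehat{H}(w)$ when $t$ divides $|v|$, and that a block collision therefore lifts) are all correct, but the step that is supposed to produce the block collision has a genuine gap. Your reduction needs two distinct strings that collide under $H$ and differ only inside a single window of length $t$; you propose to get this by arranging for all of Monico's $AB\to BA$ swap positions to lie in the first block. However, Monico's attack is only guaranteed to succeed when the string contains at least $n\ge 2^{\sqrt{2\log_2 p}}$ zeros and $n$ ones among the positions available for swapping, so confining the swaps to the first block forces $t\gtrsim 2^{1+\sqrt{2\log_2 p}}$ (roughly $10^7$ for $p\approx 2^{256}$). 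You yourself note that realistic values of $t$ are far below this threshold, so the proposed fix contradicts its own premise: for small $t$ the algorithm's modifications necessarily spill across many blocks, and then the interspersed copies of $g$ interact with the differing bits and $H(m)=H(m')$ no longer implies $\widehat{H}(m)=\widehat{H}(m')$. Note also that working with longer block-aligned prefixes does not help, since for $|v|=kt$ with $k>1$ one has $\widehat{H}(v)\ne H(v)g^{k}$ in general.

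The paper takes a different route that avoids this obstruction entirely: it uses Monico's algorithm once more, this time in its preimage-finding mode, to obtain a bit string $b'$ with $H(b')=g^{-1}$, and then splices $b'$ into both colliding strings $m$ and $m'$ immediately after each position where the scheme inserts $g$. Each inserted $g$ is cancelled by the following $H(b')=g^{-1}$, so $\widehat{H}(m_*)=H(m)$ and $\widehat{H}(m_*')=H(m')$, and an $H$-collision of \emph{arbitrary} length (in particular, one long enough for Monico's algorithm to apply) lifts to an $\widehat{H}$-collision with no constraint relating $t$ to $p$. If you want to salvage your block-local approach, you would need either a collision-finding method for $H$ that works on length-$t$ inputs or the paper's cancellation trick; as written, the quantitative adaptation does not close the gap.
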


\begin{proof}
Monico's algorithm can find collisions for $\widehat{H}_2$.

Let $t>1$ and $g\in G$ as described in the construction of $\widehat{H}_2$. We can use the algorithm to find a preimage for $g^{-1}$ under the hash $H$, say $b'$ such that $H(b')=g^{-1}$.

Suppose that for a given bit string $m= m_1 m_2 \cdots m_{l_1} $, Monico's algorithm returns a bit string $m' = m_1' m_2' \cdots m_{l_2}'$ such that $H(m) = H(m')$ with $m \neq m'$.
We insert $b'$  into the bit strings $m$ and $m'$ in the bit positions multiple of $t+1$, obtaining the following
$$m_{*}= m_1 m_2  \cdots   m_{t}  b' \cdots m_{l_1}$$
$$m_{*}'= m_1' m_2'  \cdots   m_{t}'  b' \cdots m_{l_2}'$$

We have that $m_{*} \neq m_{*}'$ and they are collisions for $\widehat{H}$ since
\[ 
\begin{array}{lll}
\widehat{H}(m_{*}) & = &\widehat{H}(m_1 m_2  \cdots   m_{t}  b' \cdots m_{l_1})\\
                               & = & H(m_1) H(m_2 ) \cdots   H(m_{t}) H( b' )\cdots H(m_{l_1})\\
                               & = & f_{m_1} f_{m_2 } \cdots   f_{m_{t}} g H( b' )\cdots f_{m_{l_2}}\\
                                & = & f_{m_1} f_{m_2 } \cdots   f_{m_{t}} g g^{-1}\cdots f_{m_{l_1}}\\
                                 & = & H(m)
\end{array}
\]

\[ 
\begin{array}{lll}
\widehat{H}(m_{*}') & = &\widehat{H}(m_1' m_2'  \cdots   m_{t}'  b' \cdots m_{l_2}')\\
                               & = & H(m_1') H(m_2 ') \cdots   H(m_{t}') H( b' )\cdots H(m_{l_2}')\\
                               & = & f_{m_1'} f_{m_2' } \cdots   f_{m_{t}'} g H( b' )\cdots f_{m_{l_2}'}\\
                                & = & _{m_1'} f_{m_2 '} \cdots   f_{m_{t}' }g g^{-1}\cdots f_{m_{l_2}'}\\
                                 & = & H(m')
\end{array}
\]
%
%
%

This shows that Monico's algorithm produces collisions for $\widehat{H}$. Therefore, $\widehat{H}_2$ is not collision-resistant by Lemma \ref{lemma1}. 

\end{proof}
\


\section{Conclusion}\label{sec6}

This paper proves that the variant proposed by Ghaffari and Mostaghim is insecure. Our approach is to consider the mathematical structure of the design of the hash and apply the algorithm that finds second preimages and collisions for the Shpilrain and Sosnovski's hash function.

The algorithms used to break Cayley hash functions target specific vulnerabilities of each underlying group used and do not invalidate the generic scheme of these functions.  Petit and Quisquater \cite{pq11,pq11v2} suggested that security might be recovered for the Cayley hash functions design by introducing new generators. 

Although many Cayley hash functions have been proven insecure, and their use significantly compromises the security of computer systems, we learn from prior vulnerabilities to develop more robust hash functions. It is essential to research improvements and develop new designs for Cayley hash functions that can sustain quantum attacks.


%
%
%
%

\end{document}